\newcommand{\R}{\mathbb{R}}
\newcommand{\N}{\mathbb{N}}
\newcommand{\E}{\mathbb{E}}
\renewcommand{\P}{\mathbb{P}}
\newcommand{\normal}{\mathcal{N}}
\DeclareMathOperator{\argmin}{arg\ min}
\newcommand{\eqD}{\stackrel{d}{=}}
\newtheorem{theorem}{Theorem}[section]
\newtheorem{lemma}[theorem]{Lemma}
\newtheorem{corollary}[theorem]{Corollary}
\newtheorem{proposition}[theorem]{Proposition}
\newtheorem{definition}[theorem]{Definition}
\newtheoremstyle{example}{\topsep}{\topsep}%
     {}
     {}
     {\bfseries}
     {}
     {\newline}
     {\thmname{#1}\thmnumber{ #2}\thmnote{ #3}}
\theoremstyle{example}
\newcommand{\eps}{\epsilon}
\title{On extracting common random bits \\ from correlated sources on large alphabets}
\author[2]{Siu On Chan\thanks{Supported by NSF grant DMS-1106999 and DOD ONR grant N000141110140}}
\author[1,2]{Elchanan Mossel\thanks{Supported by NSF grant DMS-1106999 and DOD ONR grant N000141110140}}
\author[1]{Joe Neeman\thanks{Supported by NSF grant DMS-1106999 and DOD ONR grant N000141110140}}
\affil[1]{Department of Statistics, UC Berkeley}
\affil[2]{Department of Computer Science, UC Berkeley}
\begin{document}
\maketitle

\begin{abstract}

Suppose Alice and Bob receive strings $X=(X_1,\ldots,X_n)$ and $Y=(Y_1,\ldots,Y_n)$ 
each uniformly random in $[s]^n$ but so that $X$ and $Y$ are correlated . For each symbol $i$, we have that 
$Y_i = X_i$ with probability $1-\eps$ and otherwise $Y_i$ is chosen independently and uniformly from $[s]$. 

Alice and Bob wish to use their respective strings to extract a uniformly chosen common sequence from $[s]^k$  but without communicating. How well can they do? The trivial strategy of outputting the first $k$ symbols yields an agreement probability of 
$(1 - \eps + \eps/s)^k$. In a recent work by Bogdanov and Mossel it was shown that in the binary case where $s=2$ and $k = k(\eps)$ is large enough then it is possible to extract $k$ bits with a better agreement probability rate. 
In particular, it is possible to achieve agreement probability 
$(k\eps)^{-1/2} \cdot 2^{-k\eps/(2(1 - \eps/2))}$ using a random construction based on Hamming balls,
and this is optimal up to lower order terms.

In the current paper we consider the same problem over larger alphabet sizes $s$ and we show that
the agreement probability rate changes dramatically as the alphabet grows. In particular we show 
no strategy can achieve agreement probability better than $(1-\eps)^k (1+\delta(s))^k$ where $\delta(s) \to 0$ as $s \to \infty$. 
We also show that Hamming ball based constructions have {\em much lower} agreement probability rate than the trivial algorithm 
as $s \to \infty$. Our proofs and results are intimately related to subtle properties of hypercontractive inequalities. 
\end{abstract}

\section{Introduction}
For an integer $s \ge 2$, consider two $[s]^n$-valued random variables
$X, Y$ (where $[s] = \{0, 1, \dots, s-1\}$)
which are sampled by first choosing $X$ uniformly and then,
independently for every coordinate $i$, taking $Y_i$ to be
a copy of $X_i$ with probability $1-\epsilon$ and an independent
sample from $[s]$ otherwise.
We will write $\P_\epsilon$ for this joint distribution on $X$
and $Y$. Note that $X$ and $Y$ are both uniformly distributed in $[s]^n$. 

The non-interactive correlation distillation (NICD) is defined as follows:
suppose that one party (Alice) receives $X$ and another (Bob) receives $Y$. Without
any communication, each party chooses a string that is uniformly distributed in $[s]^k$ with the goal of maximizing the probability 
that the two strings chosen by Alice and Bob are identical.  

\subsection{Motivation and Related Work}
This problem was studied in~\cite{BogdanovMossel:11} in the case $s=2$, with motivation from various areas.
One major motivation comes from the goal of extracting a unique identification string from process
variations~\cite{LLGSVD:05,YLHMGZ:09}, particularly in a noisy setup~\cite{SHO:08}. 

The case where the goal of the two parties is to extract a single bit was studied independently
a number of times; in this case the optimal protocol is for the two parties
to use the first bit. See~\cite{Yang:07} for references and for studying the problem of extracting one bit from two correlated sequences with different correlation structures.

In~\cite{MoOdOl:05,MORSS:06} a related question is studied: if $m$
parties receive noisy versions of a common random string, where the noise
of each party is independent, what is the strategy for the $m$ parties that maximizes
the probability that the parties agree on a {\em single} random bit of output
without communicating? \cite{MoOdOl:05} shows that for large $m$ using the majority functions on all
bits is superior to using a single bit and \cite{MORSS:06} uses hypercontractive inequalities to show that for large $m$, majority is close to being optimal. Both results were recently extended to general string spaces in~\cite{MoOlSe:12}. 

For any $k \in \N$, one protocol -- which we will call the
``trivial protocol'' -- is
for both parties to take the first $k$ symbols of their strings.
The success probability of this protocol is
$(1 - (1 - \frac 1s)\epsilon)^k \approx \exp(-k\epsilon (1-\frac 1s))$.
When $s=2$ and the protocol outputs a single bit (ie.\ $k=1$),
it is known (see e.g.~\cite{MoOdOl:05}) that the optimal
protocol is for both parties to choose the first bit. For larger $k$,
this is no longer true. Bogdanov and Mossel~\cite{BogdanovMossel:11}
studied the case $s=2$, and showed that any protocol which outputs
a uniformly random length-$k$ string has a success probability
of at most $\exp(-k \epsilon (\ln 2)/2)$.
In other words, if $p$ is the success probability of the trivial algorithm
for choosing a $k$-bit string, then every protocol with success probability
at least $p$ emits at most $k / \ln 2$ bits.

Bogdanov and Mossel showed that their bound was sharp by providing an example
(for a restricted range of $\epsilon$ and $k$) with
success probability which, for any $\delta > 0$, is at least $\exp(-k\epsilon(1+\delta) / 2)$
for small $\epsilon$ and large $k$.
In other words, if $p$ is the success probability of the trivial algorithm for choosing
a $k$-bit string, then they gave a protocol that succeeds with probability $p$
and produces a string of length $k / ((1+\delta)\ln 2)$.
Their construction was built by taking random translations of Hamming balls; we will return
to it in more detail later.

\subsection{Our results} 
We study an extension of the upper bound of~\cite{BogdanovMossel:11} to a larger alphabet.
In our main result we show that in the case of large alphabets, 
the constant-factor gap
between the upper bound and the performance of the trivial algorithm vanishes;
hence, the trivial algorithm is almost optimal for large alphabets. In particular we show 
no strategy can achieve agreement probability better than $(1-\eps)^k (1+\delta(s))^k$ where $\delta(s) \to 0$ as $s \to \infty$.

We then turn to analyze generalizations of the Hamming ball based construction of~\cite{BogdanovMossel:11}. 
Interestingly we show that these have {\em much lower} agreement probability rate than the trivial algorithm 
as $s \to \infty$. 

In this respect it is interesting to compare the case of a large number of parties
that extract a single symbol to the case 
of two parties who extract a longer string.
In the first case, the results of~\cite{MoOlSe:12} generalize those of 
~\cite{MoOdOl:05,MORSS:06} to show that Hamming ball based protocols are almost optimal for all values of $s$ when the number of parties $m$ is large. In the case presented here, Hamming ball type constructions quickly deteriorate as $s$ increases and the trivial protocol becomes almost optimal. 

The difference between the two phenomena may be explained by the fact that the problem studied
in~\cite{MoOdOl:05,MORSS:06} is closely related to reverse-hypercontractive inequalities
which hold uniformly in $s$~\cite{MoOlSe:12},
while the problem studied here is closely related to hypercontractive
inequalities which deteriorate as $s$ increases.

Our results show that the trivial algorithm is optimal up to a factor of 
$(1+\delta(s))^k$ where $\delta(s) \to 0$ as $s \to \infty$. 
An interesting open problem is to find an almost optimal algorithm for large $s$, i.e., an 
algorithm whose agreement probability is provably optimal up to a factor of $2^{-o(k)}$. 
It is quite possible that  the trivial protocol is optimal
for some large fixed values of $s$  and all large enough $k$.


\section{Definitions and results}

A {\em protocol} for NICD is defined by two
functions $f, g: [s]^n \to [s]^*$. Upon receiving their strings $X, Y \in [s]^n$,
the two parties compute $f(X)$ and $g(Y)$ respectively.
The protocol is successful if both parties agree on the same output; that is,
if $f(X) = g(Y)$. Therefore, finding an optimal NICD algorithm is
equivalent to finding functions $f, g: [s]^n \to [s]^*$ which maximize
$\P_\epsilon(f(X) = g(Y))$.

In the introduction, we mentioned the requirement that $f$ and $g$ are uniformly
distributed on $[s]^k$. In fact, we will require less for our negative results and guarantee more
in our positive results. 
In particular, for our negative results,
we will only assume that $f$ and $g$ have min-entropy at most $k$, meaning that
$\P(f(X) = z) \le s^{-k}$ for all $z \in [s]^*$ and similarly for $g$.
Of course, if $f: [s]^n \to [s]^k$ is uniformly distributed then it
has min-entropy $k$.

\subsection{Reduction to a question about sets}

Using an observation of~\cite{BogdanovMossel:11}, we can reduce the NICD problem
to the problem of finding a sets $A \subset [s]^n$ which
maximize $\P_\epsilon(Y \in A | X \in A)$. On the one hand, if we are given
good functions $f$ and $g$ then we can find a set
$A$ such that $\P(Y \in A | X \in A)$ is large:

\begin{theorem}\label{thm:partition-to-set}
  For any functions $f, g: [s]^n \to [s]^*$ having min-entropy $k$
  there is a set $A \subset [s]^n$ with
  $|A| \le s^{n-k}$ such that
  for every $0 \le \epsilon \le 1$,
  \[
    \P_\epsilon(Y \in A | X \in A) \ge \P_\epsilon(f(X) = g(Y)).
  \]
\end{theorem}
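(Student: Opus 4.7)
The plan is a two-step reduction. First, I use a Cauchy--Schwarz argument on the noise operator to replace the asymmetric pair $(f,g)$ by a single function used by both parties, and then I apply averaging to collapse that function to one of its level sets.

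For the first step, let $T_\eps$ be the noise operator on $L^2([s]^n)$ associated with the joint law $\P_\eps$, i.e., $(T_\eps h)(x) = \E_\eps[h(Y) \mid X = x]$. Writing $A_z := f^{-1}(z)$ and $B_z := g^{-1}(z)$, I expand
\[
\P_\eps(f(X) = g(Y)) \;=\; \sum_z \P_\eps(X \in A_z,\, Y \in B_z) \;=\; \sum_z \pair{\mathbf{1}_{A_z}}{T_\eps \mathbf{1}_{B_z}}.
\]
Because $T_\eps$ is a tensor product of single-coordinate operators whose eigenvalues are $1$ and $1-\eps$, it is self-adjoint and positive semidefinite on $L^2$ of the uniform measure. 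The bilinear form $\pair{u}{T_\eps v}$ therefore obeys Cauchy--Schwarz, bounding each summand by $\sqrt{\pair{\mathbf{1}_{A_z}}{T_\eps \mathbf{1}_{A_z}} \cdot \pair{\mathbf{1}_{B_z}}{T_\eps \mathbf{1}_{B_z}}}$. A second application of Cauchy--Schwarz to the sum over $z$ yields
\[
\P_\eps(f(X) = g(Y)) \;\le\; \sqrt{\P_\eps(f(X)=f(Y))\,\P_\eps(g(X)=g(Y))} \;\le\; \max\bigl\{\P_\eps(f(X)=f(Y)),\,\P_\eps(g(X)=g(Y))\bigr\}.
\]

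In the second step, assume without loss of generality that $\P_\eps(f(X) = f(Y))$ dominates. Since the level sets $\{A_z\}_z$ partition $[s]^n$,
\[
\P_\eps(f(X) = f(Y)) \;=\; \sum_z \P(X \in A_z)\,\P_\eps(Y \in A_z \mid X \in A_z)
\]
is a convex combination of the conditional probabilities with weights $\P(X \in A_z)$ summing to one. Consequently some $z^\star$ satisfies $\P_\eps(Y \in A_{z^\star} \mid X \in A_{z^\star}) \ge \P_\eps(f(X) = g(Y))$, and the min-entropy hypothesis forces $|A_{z^\star}| \le s^{n-k}$. Setting $A := A_{z^\star}$ concludes the argument.

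The substantive ingredient is the Cauchy--Schwarz step, whose validity rests only on $T_\eps$ being self-adjoint and positive semidefinite for every alphabet size $s$; this is immediate from its tensor-product structure, and the rest is routine averaging. A minor subtlety is that the choice between the partitions $\{A_z\}$ and $\{B_z\}$, and the winning index $z^\star$, may a priori depend on $\eps$; extracting a single $A$ that satisfies the theorem's uniform-in-$\eps$ conclusion requires an additional step leveraging that only finitely many candidate sets arise in $\{A_z\}_z \cup \{B_z\}_z$, each of size at most $s^{n-k}$.
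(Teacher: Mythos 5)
Your proof is correct and takes essentially the same route as the paper's: two applications of Cauchy--Schwarz (the first of which the paper leaves implicit but is justified, as you note, by self-adjointness and positive semidefiniteness of the noise operator) reduce to a single function, and then averaging over its level sets together with the min-entropy bound $|A_z| \le s^{n-k}$ yields the set $A$. The $\epsilon$-uniformity subtlety you flag at the end is equally present in the paper's own proof, which also selects the maximizing level set for a fixed $\epsilon$ (and only a fixed $\epsilon$ is used in the applications).
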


On the other hand, if we have a good set $A$ then we can construct
a function $f$ by taking certain translates of $A$.

\begin{theorem}\label{thm:set-to-partition}
  If $A \subset [s]^n$ with
  $\frac{1}{8} s^{n-k} \le |A| \le \frac{1}{4} s^{n-k}$ then
  there is a function $f: [s]^n \to [s]^k$ such that
  \begin{enumerate}
    \item $f(X)$ is uniformly distributed on $[s]^k$
    \item $f(X)$ is uniformly distributed on $[s]^k$ conditioned on $f(X) = f(Y)$
    \item
      for every $0 \le \epsilon \le 1$,
      \[
        \P_\epsilon(f(X) = f(Y)) \ge \frac{1}{16} \P_\epsilon(Y \in A | X \in A).
      \]
  \end{enumerate}
\end{theorem}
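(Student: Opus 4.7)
The plan is to construct $f$ with a built-in translation symmetry, so that conditions 1 and 2 hold automatically, and to use random translates of $A$ to secure condition 3. Identify $[s]^n$ with the abelian group $\Z_s^n$, fix any injective group homomorphism $\pi\colon[s]^k\hookrightarrow[s]^n$, and let $H=\pi([s]^k)\subset[s]^n$ be its image. Any fundamental domain $R\subset[s]^n$ of $H$ (one element per coset of $H$, so $|R|=s^{n-k}$) gives a partition $\{R+\pi(z):z\in[s]^k\}$ of $[s]^n$, and setting $f(x)=z$ for $x\in R+\pi(z)$ makes every preimage have size $s^{n-k}$, so that $f(X)$ is uniform (condition 1). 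Because $\P_\epsilon$ is invariant under simultaneous shifts $(X,Y)\mapsto(X+\pi(z),Y+\pi(z))$, one has $\P(f(X)=f(Y)=z)=\P_\epsilon(X,Y\in R+\pi(z))=\P_\epsilon(X,Y\in R)$, independent of $z$, giving condition 2.

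To secure condition 3 I would choose $R$ randomly using $A$: sample $T\in[s]^n$ uniformly, and in each coset $C$ of $H$ let $q_C$ be uniform in $C\cap(A+T)$ when this intersection is non-empty and uniform in $C$ otherwise, then put $R_T:=\{q_C\}_C$. A change of variables $u=X-T$, combined with the translation invariance of $\P_\epsilon$, lower bounds the joint probability that $x$ and $y$ in different $H$-cosets are both chosen representatives by $\mathbbm{1}[u\in A,\,u+d\in A]/(|A\cap(u+H)|\,|A\cap(u+d+H)|)$ with $d=y-x$. Summing against $\P_\epsilon(X=x,Y=y)$ and using $\sum_d |A\cap(A-d)|\,\mu_\epsilon(d)/s^n=\P_\epsilon(X,Y\in A)$, where $\mu_\epsilon(d)$ is the density of $Y-X$ under $\P_\epsilon$, gives
\[
\E_T\,\P_\epsilon(X,Y\in R_T)\;\gtrsim\;\tfrac{1}{s^k}\,\P_\epsilon(X,Y\in A),
\]
up to an off-diagonal correction coming from $H$-cosets containing more than one element of $A$. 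Writing $\P_\epsilon(X,Y\in A)=(|A|/s^n)\P_\epsilon(Y\in A\mid X\in A)$ and using $|A|\ge s^{n-k}/8$ produces the target $\tfrac{1}{16}\P_\epsilon(Y\in A\mid X\in A)$ lower bound on $s^k\cdot\E_T\P_\epsilon(X,Y\in R_T)=\E_T\P_\epsilon(f(X)=f(Y))$. A standard probabilistic-method step then picks a specific realisation $(T,\{q_C\})$, and hence a deterministic $f$, attaining at least the expected value.

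The main obstacle is controlling the off-diagonal correction: if $A$ has many elements in a single coset of $H$, the ``unique representative'' bound above loses a factor of $\max_C |A\cap C|^2$. I would address this either by averaging the entire argument over a random injective $\pi$, so that the constraint $|A|\le s^{n-k}/4$ forces typical cosets to contain $O(1)$ elements of $A$, or by refining the choice of $q_C$ on overloaded cosets. Carefully tracking the constants so that the final factor comes out to $1/16$ is where the most delicate bookkeeping lies.
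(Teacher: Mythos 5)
Your reduction of conditions 1 and 2 to the group structure and translation invariance of $\P_\epsilon$ is fine and is essentially the paper's Lemma on invariance (the paper likewise builds the partition from a translate of a subgroup of order $s^k$ and randomizes the construction, concluding by the probabilistic method). The gap is condition 3, and it is not the ``delicate bookkeeping'' your last sentence suggests: it is the whole content of the theorem. With a \emph{fixed} $\pi$ the bound you are after is simply false. Take $H=\pi([s]^k)$ to be the coordinate subgroup supported on the first $k$ coordinates and let $A=[s]^k\times B$ with $B\subset[s]^{n-k}$ of size $s^{n-2k}/4$ obtained by fixing $k$ coordinates (so $|A|=s^{n-k}/4$). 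Then every coset of $H$ meets $A+T$ either not at all or in the whole coset, so your representatives are uniform in each coset and $f$ ignores $A$ entirely; averaged over the randomness, $\P_\epsilon(f(X)=f(Y))\approx s^{-k}$ up to a term exponentially small in $n$, while $\tfrac1{16}\P_\epsilon(Y\in A\mid X\in A)$ is of order $(1-\epsilon)^{k+O(1)}$, which is much larger whenever $1-\epsilon>1/s$. So randomizing the subgroup is not a refinement to be added later; it is where the difficulty lives, and your averaged claim $\E_T\,\P_\epsilon(X,Y\in R_T)\gtrsim s^{-k}\P_\epsilon(X,Y\in A)$ is not established (note also that your sketch never uses the hypothesis $|A|\le\tfrac14 s^{n-k}$, without which the statement fails).

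The specific repair you propose --- a uniformly random injective homomorphism $\pi:[s]^k\hookrightarrow[s]^n$ forcing typical cosets to contain $O(1)$ points of $A$ --- cannot work for composite $s$. If $s$ is even, let $J=\{0,s/2\}^n$ be the $2$-torsion subgroup and let $A$ be a union of cosets of $J$ of total size about $s^{n-k}/4$ (possible once $2^n\le s^{n-k}/4$). Every subgroup $H\cong[s]^k$ contains its own $2$-torsion, a subgroup of order $2^k$ lying inside $J$, so for every $u\in A$ one has $|A\cap(u+H)|\ge 2^k$ for \emph{every} choice of $\pi$; your per-pair bound, which divides by $|A\cap(u+H)|\,|A\cap(u+d+H)|$, then loses a factor $4^k$ that no averaging over $\pi$ can recover. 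This torsion phenomenon is precisely why the paper avoids coset-multiplicity denominators altogether: it lower-bounds agreement by the event that there is a \emph{unique} center $c$ with $X,Y\in A+c$, takes the centers $C$ to be a random translate of a random subgroup assembled through the prime factorization of $s$ (random $k$-dimensional subspaces over each prime factor) so that pairs of points lie in $C$ with probability about $s^{2(k-n)}$, and then uses $|A|\le\tfrac14 s^{n-k}$ in a union bound to cap the collision term by $\tfrac12$; combined with $\P(A)\ge s^{-k}/8$ this produces the constant $\tfrac1{16}$. To complete your argument you would need either to import that pairwise-uniformity construction and switch from dividing by multiplicities to the uniqueness event, or to prove a genuinely new estimate handling overloaded cosets; as written, neither is done.
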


Note that the $f$ that we produce in Theorem~\ref{thm:set-to-partition}
satisfies stronger requirement than the one that we require in Theorem~\ref{thm:partition-to-set}.
Indeed, the $f$ from Theorem~\ref{thm:set-to-partition} is uniformly distributed
instead of only having a small minimum entropy. Moreover, $f(X)$ is uniformly
distributed given $f(X) = f(Y)$, which means that a successful execution of the
protocol will result in the two parties having uniformly random strings.

\subsection{Negative results on the performance of NICD}

In view of Theorems~\ref{thm:partition-to-set} and~\ref{thm:set-to-partition}, the NICD problem reduces
to the study of $\P_\epsilon(Y \in A | X \in A)$ over sets $A \subset [s]^n$
with a given cardinality. Actually, it turns out to be more convenient to
normalize the cardinality instead of restricting it:

\begin{definition}
For $A \subset [s]^n$, define
\[
M_\epsilon(A) = \frac{\ln \P_\epsilon(Y \in A | X \in A)}{\ln \P(A)}.
\]
\end{definition}

To illustrate the definition,
consider the set $A = \{x: x_1 = \dots = x_k = 0\}$, which corresponds
to the trivial algorithm that selects the first $k$ symbols.
In this case, $\P_\epsilon(Y \in A | X \in A) =
(1-(1-s^{-1}) \epsilon))^k$. Since $\P(A) = s^{-k}$, it follows
that
\begin{equation}\label{eq:trivial-example}
M_\epsilon(A) = \frac {1}{\ln s}
\ln \left(\frac{1}{1-(1-s^{-1})\epsilon}\right).
\end{equation}

Our main result is that the above example is optimal as
$s \to \infty$.

\begin{theorem}\label{thm:upper-bound}
For every $\delta, \epsilon > 0$ there exists $S < \infty$
such that for all $n \in \N$ and all $s \ge S$,
any set $A \subset [s]^n$ satisfies
\[
  M_\epsilon(A) \ge \frac{1}{\ln s} \Big(\ln\frac{1}{1-\epsilon} - \delta\Big)
\]
\end{theorem}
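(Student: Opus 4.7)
The plan is to reduce the theorem to a hypercontractive inequality on $[s]^n$ and extract the correct constants from the log-Sobolev inequality on a single coordinate. Set $\alpha = (\ln(1/(1-\eps)) - \delta)/\ln s$ and let $T_\rho$ denote the noise operator on $[s]^n$ with non-trivial eigenvalue $\rho = 1-\eps$. Unwinding the definition of $M_\eps$, the claim $M_\eps(A) \ge \alpha$ is equivalent to $\inr{\mathbf{1}_A}{T_\rho \mathbf{1}_A} \le \P(A)^{1+\alpha}$, which I would deduce from the two-function hypercontractive bound
\[
  \E[f(X)\,g(Y)] \le \norm{f}_r\,\norm{g}_r, \qquad r = \frac{2}{1+\alpha},
\]
valid for all nonnegative $f, g : [s]^n \to \R$, by taking $f = g = \mathbf{1}_A$ (so that $\norm{\mathbf{1}_A}_r^2 = \P(A)^{2/r} = \P(A)^{1+\alpha}$).

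By H\"older duality, the displayed inequality is equivalent to the one-function hypercontractive bound $\norm{T_\rho g}_{r'} \le \norm{g}_r$ with $r' = r/(r-1)$; this form tensorizes across coordinates (Gross) and so reduces to the $n = 1$ case. Writing $T_\rho^{(1)} = e^{-tL}$ for $L = I - U$ (with $U$ the uniform-averaging operator on $[s]$) at time $t = \ln(1/(1-\eps))$, Gross's theorem converts this one-coordinate bound into a log-Sobolev inequality for the uniform measure on $[s]$. The log-Sobolev constant there is $\ln s \cdot (1 + o(1))$ as $s \to \infty$ (Diaconis--Saloff-Coste), and plugging it into Gross's condition $r - 1 \ge \rho^{2/C_{LS}}$ yields admissible $\alpha$ up to $\ln(1/(1-\eps))/\ln s \cdot (1+o(1))$; choosing $S$ large enough for the $o(1)$ correction to be dominated by the $\delta/\ln s$ slack completes the argument.

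The step I expect to be the main obstacle is confirming that the leading constant really is $1$ and not some smaller $c < 1$: a black-box application of Gross's theorem can lose a multiplicative factor in the exponent depending on which convention one uses for the Dirichlet form and the log-Sobolev normalization. A safer backup is a direct verification of the $n = 1$ inequality. Lagrange multipliers applied to the one-coordinate problem (extremizing $\norm{f}_r^2 / \E[f(X)f(Y)]$ with $\E[f]$ fixed) show that the extremal $f$ satisfies $a f(i)^{r-1} - b f(i) = c$ for constants $a, b, c$ at every $i$, forcing $f$ to take at most two positive values; a short case analysis then identifies scaled indicators $M \mathbf{1}_B$ as the worst case. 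For these the inequality reduces to the scalar relation $1 - \eps + \eps b/s \le (b/s)^\alpha$ for $b \in \{1, \ldots, s\}$, which follows from the concavity of $t \mapsto t^\alpha - (1 - \eps + \eps t)$ on $[1/s, 1]$ together with equality at $t = 1$ and nonnegativity at $t = 1/s$; the latter amounts to $(1-\eps) e^\delta \ge 1 - \eps + \eps/s$, which holds for $s \ge \eps/((1-\eps)(e^\delta - 1))$.
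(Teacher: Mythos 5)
Your main route is essentially correct, and it is genuinely different from the paper's. The paper does not go through the log-Sobolev inequality at all: it invokes Oleszkiewicz's sharp $(p,2)$-hypercontractive inequality for the uniform measure on $[s]$ (the bound $\|T_\tau f\|_2\le\|f\|_p$ for $\tau\le\sigma(1/s,p)$, applied with $\tau=\sqrt{1-\epsilon}$ and $f=1_A$), and the whole analytic content is an asymptotic analysis of $\sigma(1/s,p)$ showing $\sigma^2\approx s^{1-2/p}$, which gives $M_\epsilon(A)\ge 2/p-1$ with $2/p-1$ as close to $\ln\frac{1}{1-\epsilon}/\ln s$ as the $\delta$-slack allows. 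Your replacement of Oleszkiewicz by Diaconis--Saloff-Coste is sound, and the worry you flag about the leading constant resolves in your favor: normalize the log-Sobolev inequality against the Dirichlet form of the generator $I-\E$ (i.e.\ $\mathcal{E}(f,f)=\Var(f)$ on one coordinate, which is exactly the generator of your $T_\rho$ with $\rho=e^{-t}$), use the DSC bound $C_{LS}\le \ln(s-1)/(1-2/s)=(1+o(1))\ln s$ (their spectral-gap/minimal-atom bound, with gap $1$ and $\pi_*=1/s$), and the discrete Gross implication for reversible chains calibrated so that the Gaussian case ($C_{LS}=2$) recovers Nelson's condition; then for the conjugate pair $r,r'$ the condition is precisely your $r-1\ge\rho^{2/C_{LS}}$, i.e.\ $\mathrm{artanh}(\alpha)\le \ln\frac{1}{1-\epsilon}/C_{LS}$, which for $\alpha=(\ln\frac{1}{1-\epsilon}-\delta)/\ln s$ holds once $s$ is large (and the case $\delta\ge\ln\frac{1}{1-\epsilon}$ is trivial since $M_\epsilon\ge 0$). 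What each approach buys: the paper's use of the exact constant $\sigma(1/s,p)$ is what makes the numerical computation of $S(\delta,\epsilon)$ in Figure~1 possible and gives the best constants for fixed $s$; your LSI route is more modular, cites only standard finite-Markov-chain machinery, and extends immediately to non-uniform coordinate distributions via the minimal atom (matching the paper's remark after Theorem~2.4).

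The backup argument, however, has a genuine gap and should not be leaned on. The Lagrange-multiplier step plausibly shows an extremizer of the $n=1$ problem takes at most two values, but the jump from ``two-valued'' to ``scaled indicators $M\mathbf{1}_B$'' is unjustified: a two-valued function $u+v\mathbf{1}_B$ with $u>0$ is not an indicator, and optimizing over such functions is exactly the hard content of the sharp two-point/$s$-point hypercontractive inequalities of Lata{\l}a--Oleszkiewicz and Oleszkiewicz. Verifying the scalar inequality $1-\epsilon+\epsilon b/s\le (b/s)^\alpha$ for indicators alone does not establish $\|T_\rho\|_{r\to r'}\le 1$ for all functions, so the ``safer backup'' is in fact the weaker leg; the LSI argument is the one to keep.
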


Note that since $\ln \P(A)$ is negative, Theorem~\ref{thm:upper-bound}
provides an upper bound on $\P_\epsilon(Y \in A \mid X \in A)$
for all sets $A$ of a fixed probability, and therefore an upper bound
on the agreement probability of any NICD protocol.
We remark that our proof extends to the case where the
$X_i$ are chosen independently from some distributions whose smallest
atoms are at most $\alpha$. In this case, the theorem holds with
$s$ replaced by $1/\alpha$.

As a corollary of Theorems~\ref{thm:partition-to-set}
and~\ref{thm:upper-bound}, we obtain a bound on the performance
of any NICD protocol.

\begin{corollary}\label{cor:nicd-fixed-k}
  For any $\delta, \epsilon > 0$, there exists $S < \infty$ such that
  for all $n, k \in \N$,
  for any $s \ge S$, and for any
  NICD protocol $f, g$ on $[s]$ with min entropy at most $k$,
  the probability that the protocol succeeds with noise $\epsilon$ is at most
  $(1-\epsilon)^ke^{\delta k}$.
\end{corollary}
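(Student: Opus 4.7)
The plan is to derive this corollary directly from the two theorems stated just above it: Theorem~\ref{thm:partition-to-set} reduces the NICD protocol to a single set $A$, and Theorem~\ref{thm:upper-bound} controls $\P_\epsilon(Y \in A \mid X \in A)$ for any such $A$. No further work is required beyond fitting the two pieces together and doing a small algebraic rearrangement of $M_\epsilon(A) = \ln \P_\epsilon(Y \in A \mid X \in A) / \ln \P(A)$.

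Concretely, given $\delta, \epsilon > 0$, first I would pick $\delta' \in (0, \delta]$ small enough that $\delta' < \ln\frac{1}{1-\epsilon}$ (if $\delta$ itself already exceeds $\ln\frac{1}{1-\epsilon}$ the corollary is weaker than the trivial bound $1$, so this is not restrictive), and then let $S$ be the constant supplied by Theorem~\ref{thm:upper-bound} for the parameters $(\delta', \epsilon)$. For any $s \ge S$, any $n, k \in \N$, and any protocol $(f, g)$ of min-entropy at most $k$, Theorem~\ref{thm:partition-to-set} produces a set $A \subset [s]^n$ with $|A| \le s^{n-k}$ and
\[
  \P_\epsilon(f(X) = g(Y)) \le \P_\epsilon(Y \in A \mid X \in A).
\]
We may assume $k \ge 1$ and $A \ne \emptyset$, since otherwise the bound is trivial. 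Then $\ln \P(A) \le -k\ln s < 0$, and Theorem~\ref{thm:upper-bound} gives $M_\epsilon(A) \ge \frac{1}{\ln s}\bigl(\ln\frac{1}{1-\epsilon} - \delta'\bigr) > 0$.

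Multiplying these two inequalities, while remembering that $\ln \P(A)$ is negative so the direction flips, yields
\[
  \ln \P_\epsilon(Y \in A \mid X \in A) = M_\epsilon(A)\,\ln \P(A) \le -k\Bigl(\ln\tfrac{1}{1-\epsilon} - \delta'\Bigr),
\]
which exponentiates to $\P_\epsilon(Y \in A \mid X \in A) \le (1-\epsilon)^k e^{k\delta'} \le (1-\epsilon)^k e^{k\delta}$, as required.

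I do not expect any real obstacle: essentially all of the mathematical content is absorbed by Theorems~\ref{thm:partition-to-set} and~\ref{thm:upper-bound}, and the remaining task is just the sign-sensitive manipulation above together with a verification that the bound is vacuous in the degenerate cases ($k = 0$, or $\delta$ so large that $(1-\epsilon)^k e^{\delta k} \ge 1$).
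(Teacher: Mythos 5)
Your proposal is correct and follows essentially the same route as the paper: reduce to a single set $A$ via Theorem~\ref{thm:partition-to-set}, apply Theorem~\ref{thm:upper-bound}, and rearrange $M_\epsilon(A)\ln\P(A)$ with attention to the sign of $\ln\P(A)$. Your extra step of shrinking $\delta$ to $\delta'<\ln\frac{1}{1-\epsilon}$ (so the bracket is positive when multiplying by $\ln\P(A)\le -k\ln s$) is a small point of care that the paper's proof passes over silently, but it does not change the argument.
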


Since the success rate of the trivial protocol with min-entropy $k$ is
bigger than $(1-\epsilon)^k$, this shows that for large $s$,
no protocol can be succeed with much higher probability than the trivial protocol.

\begin{proof}
  Fix a protocol $f, g$ and let $A$ be a set such that
  $|A| \le s^{n-k}$ and
  $\P_\epsilon(Y \in A | X \in A) \ge \P_\epsilon(f(X) = g(Y))$
  (such an $A$ exists by Theorem~\ref{thm:partition-to-set}).
  Then Theorem~\ref{thm:upper-bound} implies (recalling that
  $\ln \P(A)$ is negative)
  \[\ln\P_\epsilon(Y \in A | X \in A)
    \le \frac{\ln \P(A)}{\ln s} \Big(\log \frac{1}{1-\epsilon} - \delta\Big)
    \le -k\Big(\log \frac{1}{1-\epsilon} - \delta\Big)
  \]
  Taking the exponential of both sides yields the corollary.
\end{proof}

Of course, we can also restate Corollary~\ref{cor:nicd-fixed-k}
for a fixed probability of success and a varying $k$:

\begin{corollary}\label{cor:nicd-fixed-p}
  For any $\delta, \epsilon > 0$, there exists $S < \infty$ such that
  for all $n \in \N$, for all $0 < p < 1$,
  for any $s \ge S$, and for any
  NICD protocol $f, g$ that succeeds with probability at least $p$,
  if $k$ is the min-entropy of the protocol then the trivial protocol on
  $\lfloor k \frac{\log (1-\epsilon)}{\log (1-\epsilon) + \delta}\rfloor$ symbols also succeeds with probability
  at least $p$.
\end{corollary}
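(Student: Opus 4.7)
The plan is to derive this corollary as an essentially algebraic consequence of Corollary~\ref{cor:nicd-fixed-k}. Fix $\delta, \epsilon > 0$. The first move is to pick an auxiliary parameter $\delta' > 0$ small enough (depending on $\delta$ and $\epsilon$) so that the $\delta'$-slack coming from Corollary~\ref{cor:nicd-fixed-k}, together with the $O(1/s)$ gap between $\log(1-\epsilon)$ and $\log(1-\epsilon(1-1/s))$ arising from the trivial protocol, can both be absorbed into the final $\delta$. Applying Corollary~\ref{cor:nicd-fixed-k} with parameter $\delta'$, for all $s\ge S$ with $S$ large, any NICD protocol with min-entropy $k$ has success probability at most $(1-\epsilon)^k e^{\delta' k}$. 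Since our protocol succeeds with probability at least $p$, taking logarithms gives $\log p \le k(\log(1-\epsilon)+\delta')$.

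Second, I would lower bound the trivial protocol: on $k'$ symbols it succeeds with probability $(1-\epsilon(1-1/s))^{k'} \ge e^{k'(\log(1-\epsilon)-\delta'')}$, where $\delta''$ can be made arbitrarily small by enlarging $S$. Since $\log(1-\epsilon)-\delta'' < 0$, the requirement that this be at least $p$ becomes the upper bound $k' \le \log p /(\log(1-\epsilon)-\delta'')$. Now I would combine the two bounds. Dividing $\log p \le k(\log(1-\epsilon)+\delta')$ by the negative quantity $\log(1-\epsilon)-\delta''$ flips the inequality and yields
\[
\frac{\log p}{\log(1-\epsilon)-\delta''} \;\ge\; k\cdot\frac{\log(1-\epsilon)+\delta'}{\log(1-\epsilon)-\delta''}.
\]
By choosing $\delta',\delta''$ small enough (equivalently, $S$ large enough), the right-hand ratio can be made to coincide with the ratio appearing in the statement up to the tolerance $\delta$, so taking the floor gives a valid $k'$.

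The only real obstacle here is careful sign-tracking and error accounting: because $\log(1-\epsilon) < 0$, every division flips the inequality, and the two sources of slack (the $\delta'$ from Corollary~\ref{cor:nicd-fixed-k} and the $O(1/s)$ correction in the trivial protocol) have to be combined into the single $\delta$ of the statement. There is no conceptual difficulty beyond this bookkeeping, since the content of the corollary is essentially the contrapositive of Corollary~\ref{cor:nicd-fixed-k} rephrased in terms of the number of symbols the trivial protocol needs to match a given success probability.
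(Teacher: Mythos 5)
Your route is the intended one: the paper gives no separate argument for Corollary~\ref{cor:nicd-fixed-p}, presenting it as a direct restatement of Corollary~\ref{cor:nicd-fixed-k}, and your logarithmic bookkeeping is exactly that restatement. Two points should be tightened, though. First, the $O(1/s)$ correction for the trivial protocol is not a source of error at all: $1-\epsilon(1-\frac1s)\ge 1-\epsilon$ for every $s$, so the trivial protocol on $k'$ symbols already succeeds with probability at least $(1-\epsilon)^{k'}$ (the paper notes this right after Corollary~\ref{cor:nicd-fixed-k}); your $\delta''$, and any enlargement of $S$ on its account, can simply be dropped. Second, the final step should not be phrased as making your ratio ``coincide'' with the one in the statement. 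Writing $a=\ln\frac{1}{1-\epsilon}>0$, your derived ratio $\frac{\ln(1-\epsilon)+\delta'}{\ln(1-\epsilon)-\delta''}=\frac{a-\delta'}{a+\delta''}$ is always less than $1$, whereas the printed ratio $\frac{\log(1-\epsilon)}{\log(1-\epsilon)+\delta}$ has negative numerator and (for small $\delta$) negative denominator of smaller magnitude, hence exceeds $1$; taken literally it would ask the trivial protocol to recover more than $k$ symbols, which is false for large $k$ and contradicts the surrounding prose, so the statement must be read with $\ln\frac{1}{1-\epsilon}$, giving the target ratio $\frac{a}{a+\delta}<1$. With that reading, what you need is domination rather than coincidence: apply Corollary~\ref{cor:nicd-fixed-k} with any $\delta'\le\frac{a\delta}{a+\delta}$ (a choice depending only on $\delta$ and $\epsilon$, as the quantifiers require, and in particular not on $p$, $k$, $n$ or $s$), so that $\ln(1/p)\ge k(a-\delta')\ge k\frac{a^2}{a+\delta}$; then $k'=\lfloor k\frac{a}{a+\delta}\rfloor$ satisfies $k'a\le\ln(1/p)$, hence $(1-\epsilon)^{k'}\ge p$ and the trivial protocol on $k'$ symbols succeeds with probability at least $p$. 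With these adjustments your argument is correct and is the same one the paper has in mind.
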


In other words, for a fixed probability of failure, a trivial protocol
can recover almost as many symbols as any other protocol (when $s$ is large).

The dependence of $S$ on $\delta$ and $\epsilon$ is not made explicit in
our proof. However, our proof does provide a way to
approximate $S(\delta, \epsilon)$ on a computer; therefore, we produced a plot
(Figure~\ref{fig:sim})
showing the approximate value of $S$
for various values of $\delta$ and $\epsilon$.

\begin{figure}
\begin{center}
\includegraphics[width=0.8\textwidth]{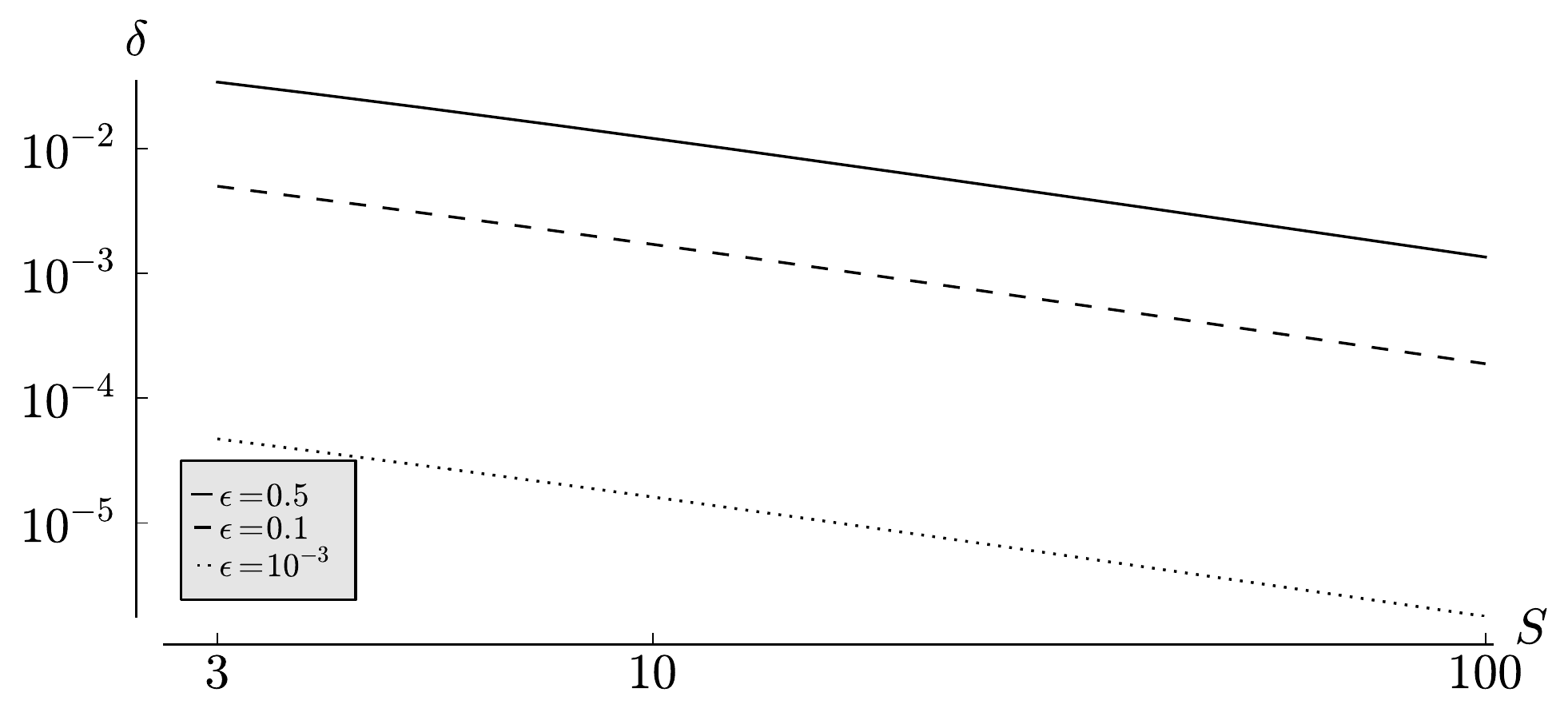}
\caption{
  The relationship, in log-log scale, between $S$ and $\delta$
  in Theorem~\ref{thm:upper-bound} for various values of $\epsilon$: $0.5$ (solid),
  $0.1$ (dashed), and $10^{-3}$ (dotted). For each of these values of $\epsilon$,
  every point $(s, \delta)$ that is above the corresponding line,
  and every $n \in \N$, all sets $A \subset [s]^n$ satisfy
  $M_\epsilon(A) \ge \frac{1}{\ln s} (\ln \frac{1}{1-\epsilon} - \delta)$.
\label{fig:sim}
}
\end{center}
\end{figure}

\subsection{An example: the Hamming ball}

As we have already mentioned,~\cite{BogdanovMossel:11} showed that
when $s=2$, the trivial algorithm is optimal up to a constant factor;
As we have just seen, this constant factor converges to 1 as $s \to \infty$.
However,~\cite{BogdanovMossel:11} also gave a positive result: they gave
an example that achieves optimal performance (at least, up to lower order
terms and for a particular range of $k$ and $\epsilon$).
Since their example can be generalized to $s > 2$, we can
examine its performance as $s \to \infty$, and compare it to the trivial
algorithm.

Define the set
\[
  A_{s,\alpha,n} = \Big\{x \in [s]^n: \#\{i: x_i \ne 0\} \le n \frac{s-1}{s} - \alpha \sqrt n \Big\}.
\]
In other words, $A_{s,\alpha,n}$ is a Hamming ball around zero of radius $n \frac{s-1}{s} - \alpha \sqrt n$.
When $s=2$,~\cite{BogdanovMossel:11} showed that $M_\epsilon(A_{2,\alpha,n}) \approx \epsilon/2$
as $n, t \to \infty$ and $\epsilon \to 0$ (note that this does not contradict
Theorem~\ref{thm:upper-bound}, which only holds for sufficiently large $s$).
Since the trivial algorithm has $M_\epsilon(A) \approx \epsilon / (2 \ln 2)$ for small $\epsilon$,
this shows that the Hamming ball NICD protocol is better than the trivial one for $s=2$.
The situation reverses, however, as $s$ grows:
\begin{proposition}
  There exists a constant $c$ such that for any $s, \alpha$ and $\epsilon$,
  \[
    \lim_{n \to \infty} M_\epsilon(A_{s,\alpha,n}) \ge c \epsilon.
  \]
\end{proposition}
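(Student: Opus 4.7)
The plan is to reduce the problem to a bivariate Gaussian tail computation via the central limit theorem, and then to apply the two-function Gaussian hypercontractive inequality of Nelson--Bonami--Beckner to bound the resulting ratio uniformly.

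First I would observe that $A_{s,\alpha,n}$ depends on $x$ only through the Hamming weight $H(x) = \sum_i \mathbbm{1}\{x_i \ne 0\}$. Setting $Z_i = \mathbbm{1}\{X_i \ne 0\}$ and $Z_i' = \mathbbm{1}\{Y_i \ne 0\}$, the pairs $(Z_i, Z_i')$ are i.i.d.\ with common marginal Bernoulli$(p)$, $p = (s-1)/s$, and a short computation gives $\Cov(Z_i, Z_i') = p(1-p)(1-\epsilon)$, so that $\mathrm{Corr}(Z_i, Z_i') = 1 - \epsilon$, \emph{independent of $s$}. Writing $\sigma = \sqrt{p(1-p)}$, the multivariate CLT yields
\[
\Big(\tfrac{H(X) - np}{\sigma\sqrt n},\ \tfrac{H(Y) - np}{\sigma\sqrt n}\Big) \toD (U, V),
\]
where $(U, V)$ is standard bivariate Gaussian with correlation $1-\epsilon$. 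The event $\{X \in A_{s,\alpha,n}\}$ translates (in the limit) to $\{U \le a\}$ with $a := -\alpha/\sigma$; since the orthant $\{u \le a,\, v \le a\}$ is a continuity set of the limit law, $\P(A_{s,\alpha,n}) \to \P(U \le a)$ and $\P_\epsilon(X, Y \in A_{s,\alpha,n}) \to \P(U \le a,\, V \le a)$. Hence
\[
\lim_{n \to \infty} M_\epsilon(A_{s,\alpha,n}) \;=\; \frac{\ln \P(V \le a \mid U \le a)}{\ln \P(U \le a)}.
\]

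Next I would invoke the two-function Gaussian hypercontractive inequality: for any measurable $f, g \ge 0$ and any $p, q \ge 1$ with $(p-1)(q-1) \ge (1-\epsilon)^2$,
\[
\E[f(U)\, g(V)] \;\le\; \|f\|_p \|g\|_q .
\]
Taking $f = g = \mathbbm{1}_{(-\infty, a]}$ and the optimal symmetric choice $p = q = 2 - \epsilon$ (for which $(p-1)(q-1) = (1-\epsilon)^2$) gives $\P(U \le a,\, V \le a) \le \P(U \le a)^{2/(2-\epsilon)}$. Because $\ln \P(U \le a) < 0$, taking logarithms and dividing through yields
\[
\lim_{n \to \infty} M_\epsilon(A_{s,\alpha,n}) \;\ge\; \frac{2}{2-\epsilon} - 1 \;=\; \frac{\epsilon}{2-\epsilon} \;\ge\; \frac{\epsilon}{2},
\]
so $c = 1/2$ suffices.

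The substantive point --- and the reason this argument succeeds uniformly in $s$ --- is that the correlation $\mathrm{Corr}(Z_i, Z_i') = 1 - \epsilon$ does not depend on $s$, even though the common marginal $(s-1)/s$ of $Z_i$ becomes arbitrarily biased as $s \to \infty$. Passing through the CLT therefore replaces a biased product measure on $\{0, 1\}^n$ (whose hypercontractive constants degrade with $s$) by a standard bivariate Gaussian whose hypercontractive exponent depends only on $\epsilon$. Once that reduction is in place the remaining work --- verifying continuity-set convergence in the CLT and picking the optimal symmetric exponent $p = q = 2 - \epsilon$ --- is routine; the only real obstacle is recognizing that this particular reduction is the right way to sidestep the $s$-dependence.
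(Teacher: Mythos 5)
Your proof is correct and takes essentially the same route as the paper: reduce via the CLT to orthant probabilities of a bivariate Gaussian with correlation $1-\epsilon$, then use Gaussian hypercontractivity to bound the conditional orthant probability by a power $\P(U\le a)^{c\epsilon}$ of the marginal. The only cosmetic difference is that you invoke the two-function (bilinear) form with $p=q=2-\epsilon$, whereas the paper runs the Ornstein--Uhlenbeck semigroup argument ($\E f(Z_1)f(Z_2) = \E (P_{\tau/2}f)^2$ plus Nelson--Gross, its Lemma on normal probabilities), which yields the same kind of exponent and the same conclusion, your version giving the explicit constant $c=1/2$.
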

Since the trivial algorithm has $M_\epsilon(A) \sim \epsilon / \ln s$,
it is better than the Hamming ball protocol when $s$ is large.
In terms of the agreement probability, an argument like the proof of Corollary~\ref{cor:nicd-fixed-k}
shows that the agreement probability of the Hamming ball protocol is
at most $(1-\epsilon)^{c k \ln s}$. In terms of the number of recovered symbols,
the Hamming ball protocol with the same agreement probability as the $k$-symbol
trivial protocol can only recover $c k/\ln s$ symbols.

\section{Reduction to a single set}

In this section, we will prove Theorems~\ref{thm:partition-to-set}
and~\ref{thm:set-to-partition}, which
reduce the NICD problem to a question about optimal subsets of $[s]^n$.
The proof of Theorem~\ref{thm:partition-to-set} is straightforward,
and essentially follows directly from the Cauchy-Schwarz inequality.

\begin{proof}[Proof of Theorem~\ref{thm:partition-to-set}]
  Suppose that
  $f, g: [s]^n \to [s]^k$ have min-entropy $k$.
  For $z\in [s]^k$, let $f_z:[s]^n\to \{0,1\}$ be the function
  \[ f_z(x) = \begin{cases} 1, \qquad \text{if $f(x) = z$,} \\ 0, \qquad
      \text{otherwise.} \end{cases} \]
  Define $g_z(x)$ similarly.
  Then
  \begin{align*}
    \P_\epsilon(f(X) = g(Y)) &= \sum_{z\in [s]^k} \P_\epsilon(f(X) = g(Y) = z) \\
    &= \sum_{z\in [s]^k} \E f_z(X) g_z(Y) \\
    &\leq \sum_{z\in [s]^k} \sqrt{\E f_z(X) f_z(Y)} \sqrt{\E g_z(X) g_z(Y)} \\
  &\leq \sqrt{\sum_{z\in [s]^k} \E f_z(X) f_z(Y)} \sqrt{\sum_{z\in [s]^k}
    \E g_z(X) g_z(Y)} ,
  \end{align*}
  where both inequalities are Cauchy-Schwarz.

  For each $z \in [s]^k$, let $A_z$ be the set $f^{-1}(z)$. Since $f$ has min-entropy $k$, $|A_z| \le s^{n-k}$
  for all $z$. Let $A$ be the $A_z$ which maximizes $\P_\epsilon[Y \in A_z \mid X \in A_z]$.
  Then
  \begin{align*}
    \sum_{z\in [s]^k} \E f_z(X) f_z(Y) &= \sum_{z\in [s]^k} \P_\epsilon(f(X) = f(Y) =
    z) \\
    &= \sum_{z\in [s]^k} \P_\epsilon(f(X) = z) \P_\epsilon(Y\in A_z\mid X\in A_z)
    \\
    &\le \P_\epsilon(Y \in A \mid X \in A).
    \qedhere
  \end{align*}
\end{proof}

The idea behind Theorem~\ref{thm:set-to-partition} is,
given a set $A \subset [s]^n$ with $\frac{1}{8} s^{n-k} \le |A| \le \frac{1}{4} s^{n-k}$,
to construct a partition of $[s]^n$ out of randomly translated copies
of $A$. Let $C \subset [s]^n$, $|C| = s^k$ be the set of ``centers.'' We will choose
$C$ randomly; we will say how to choose it later. Let $f_C: [s]^n \to C$
to be some function with the property that if $x \in A + c$ for a unique $c \in C$
then $f_C(x) = c$. Clearly, then,
\begin{equation}\label{eq:agreement-lower-bound}
  \P_\epsilon(f_C(X) = f_C(Y)) \ge 
  \P_\epsilon(\exists ! c \in C \text{ such that } X, Y \in A + c).
\end{equation}
The goal is to find a $C$ which makes the right-hand side large; this will
allow us to prove property 3 in the second part of Theorem~\ref{thm:partition-to-set}.

Note, by the way, that it is sufficient to prove 
Theorem~\ref{thm:set-to-partition} with $[s]^k$ replaced by an arbitrary set $C$
satisfying $|C| = s^k$. Since such a $C$ is in bijection with $[s]^k$, the theorem as stated
will follow.

\begin{lemma}\label{lem:s-to-p-stability}
  Suppose that $C$ is chosen (randomly) such that for any $a, b \in [s]^n$,
  $\P(a, b \in C) = s^{2(t-n)}$. Then
  \[
    \E_C \P_\epsilon (f_C(X) = f_C(Y)) \ge \frac{1}{16} \P(\epsilon(Y \in A \mid X \in A).
  \]
  In particular, there exists a fixed $C$ such that $f_C$ satisfies property 3
  of Theorem~\ref{thm:set-to-partition}.
\end{lemma}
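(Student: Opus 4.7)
The plan is to apply a standard second-moment argument to the random variable
\[N = N(X, Y, C) := |\{c \in C : X, Y \in A + c\}|.\]
By~\eqref{eq:agreement-lower-bound}, it suffices to show that $\P(N = 1) \ge q/16$, where $q = \P_\eps(Y \in A \mid X \in A)$ and the probability on the left is taken jointly over $X, Y$, and $C$. Writing also $p = \P(X \in A)$, the size constraint on $A$ gives $s^k p \in [1/8, 1/4]$, which is what will make the constants eventually match.

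First I would compute the two factorial moments of $N$. Expanding $N$ as a sum of indicators, using the pairwise hypothesis on $C$ (which at $a = b$ gives $\P(a \in C) = s^{k-n}$), and invoking the translation-invariance of the joint law of $(X, Y)$ under shifts $(x, y) \mapsto (x+c, y+c)$, a direct calculation produces $\E[N] = s^k p q$. For the second factorial moment, set $B_d = A \cap (A+d)$. The same invariance collapses the double sum over $c \ne c'$ to a single sum over the difference $d = c'-c$, yielding
\[\E[N(N-1)] = s^{2k-n} \sum_{d \ne 0} \P(X, Y \in B_d).\]
Extending the sum to all $d$ and exchanging it with the expectation, the inner count is $\1{X,Y \in A}\,|(X-A) \cap (Y-A)|$, and the crude bound $|(X-A) \cap (Y-A)| \le |A|$ gives $\E[N(N-1)] \le s^{2k} p^2 q$.

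With these estimates in hand, the remainder is bookkeeping. Since $s^k p \le 1/4$, one has $\E[N^2] = \E[N] + \E[N(N-1)] \le (1 + s^k p)\E[N] \le \tfrac54 \E[N]$, and the Paley--Zygmund inequality yields $\P(N \ge 1) \ge (\E N)^2/\E[N^2] \ge \tfrac45 \E[N] \ge q/10$ (using $s^k p \ge 1/8$). Markov applied to $\binom{N}{2}$ gives $\P(N \ge 2) \le \tfrac12 \E[N(N-1)] \le \tfrac12 (s^k p)^2 q \le q/32$. Subtracting,
\[\P(N = 1) \ge \frac{q}{10} - \frac{q}{32} = \frac{11q}{160} > \frac{q}{16},\]
as required, and since the bound holds in expectation over $C$, a fixed $C$ achieving it exists.

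The hard part is really just the second-moment calculation: one has to exploit translation-invariance carefully to reduce $\E[N(N-1)]$ to a sum indexed by $d = c' - c$, and then notice that $\sum_d \1{X,Y \in A+d} = |(X-A)\cap(Y-A)|$ so that the trivial bound by $|A|$ closes the argument. After that, everything fits together because the normalization $|A| \asymp s^{n-k}$ forces $s^k p$ into the window $[1/8, 1/4]$ where Paley--Zygmund beats the Markov correction from $\P(N \ge 2)$ by exactly the right margin.
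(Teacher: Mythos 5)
Your moment computations are correct and pleasant (the reduction of $\E[N(N-1)]$ to $\sum_{d\neq 0}\P_\epsilon(X,Y\in A\cap(A+d))$ and the bound $|(X-A)\cap(Y-A)|\le |A|$ are exactly right), but there is a genuine gap at the very first step: the event $\{N=1\}$ does not imply $f_C(X)=f_C(Y)$. The function $f_C$ is only guaranteed to satisfy $f_C(x)=c$ when $x$ lies in a \emph{unique} translate $A+c$. If $X,Y\in A+c$ for exactly one common $c$, but $X$ also lies in some $A+c'$ with $c'\neq c$ (and $Y$ does not), then $f_C(X)$ may equal $c'$ while $f_C(Y)=c$, and the protocol fails. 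So reading~\eqref{eq:agreement-lower-bound} as ``$N=1$'' is too weak; the event one must control is the stronger one in which $X,Y\in A+c$ for some $c$ \emph{and neither $X$ nor $Y$ lies in any other translate}. This is precisely why the paper's proof carries the term $\P_\epsilon(X\text{ or }Y\in A_{c'}\mid X,Y\in A_c)$ with its union bound and factor of $2$: it excludes translates containing just one of the two points, which your $\E[N(N-1)]$ (pairs of translates containing \emph{both} points) does not account for. (A smaller quibble: the hypothesis at $a=b$ would literally give $\P(a\in C)=s^{2(k-n)}$, not $s^{k-n}$; what you actually need for $\E N=s^kpq$ is $\E|C|=s^k$, which comes from the setup $|C|=s^k$, with the pairwise condition intended for $a\neq b$.)

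The good news is that your second-moment framework survives the correction, at the cost of redoing the bookkeeping. Writing $m=s^kp\in[1/8,1/4]$, bound the bad event $B=\{\exists\, c\neq c'\in C:\ X,Y\in A+c,\ X\in A+c'\text{ or }Y\in A+c'\}$ by the same translation-invariance argument: by symmetry of the law of $(X,Y)$ and the pairwise condition, $\P(B)\le 2\,s^{2(k-n)}|A|\,s^n pq=2m^2q$, since for each $c$ with $X,Y\in A+c$ there are at most $|A|$ translates containing $X$. Then $\P(f_C(X)=f_C(Y))\ge\P(N\ge 1)-\P(B)\ge\bigl(\tfrac{m}{1+m}-2m^2\bigr)q$, and on $[1/8,1/4]$ the minimum of $\tfrac{m}{1+m}-2m^2$ is attained at $m=1/4$ and equals $3/40=0.075>1/16$, so the constant still clears. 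Note that one must keep the dependence on $m$ joint here: plugging in the worst case $m=1/4$ for the bad event and $m=1/8$ for Paley--Zygmund separately gives $q/10-q/8<0$ and the argument would appear to fail. With this repair your route is a legitimate alternative to the paper's (which proceeds by a direct conditional union bound rather than Paley--Zygmund plus Markov), but as written the proposal proves a lower bound on the wrong event.
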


\begin{proof}
  We begin from the right-hand side of~\eqref{eq:agreement-lower-bound}:
  \begin{align}
  &\P_\epsilon(\exists ! c \in C \text{ such that } X, Y \in A + c) \\
  &\geq \E_C \sum_{c\in C} \P_\epsilon (X,Y\in A_c) \left( 1 - \sum_{c'\neq
      c} \P_\epsilon (\text{$X$ or $Y\in A_{c'} \mid X,Y\in A_c$}) \right) \notag \\
      \label{eq:partition-to-set-1}
      &= s^k \E_c \P_\epsilon (X,Y\in A_c) \left( 1 - (s^k-1) \E_{c'} \P_\epsilon
    (\text{$X$ or $Y\in A_{c'} \mid X,Y\in A_c$}) \right) .
  \end{align}
  By our assumption on the distribution of $C$,
  $c'\neq c$ is uniformly random given $c$.
  Thus
  \begin{align*}
    \E_{c'} \P_\epsilon (\text{$X$ or $Y\in A_{c'}\mid X,Y\in A_c$}) &\leq
    2\E_{c'} \P_\epsilon (X\in A_{c'}\mid X,Y\in A_c) \\
    &\leq 2\P_\epsilon (X\in A) \leq s^{-k}/2,
  \end{align*}
  where the last line follows because $|A| \le s^{n-k}/4$.

  Plugging this into~\eqref{eq:partition-to-set-1},
  \[ \E_C \P_\epsilon(f(X) = f(Y)) \geq \frac{s^k}2 \P_\epsilon (X,Y\in A) =
  \frac{\P_\epsilon(Y \in A \mid X \in A)}{16} .
  \qedhere
  \]
\end{proof}

To check properties 2 and 3, we need to be a little more specific about our
choice of $f_C$. So far, we have only assumed that $f_C(x) = c$ if $c$ is the
only member of $C$ with $x \in A + c$. Now, take
$\prec$ to be some total order on $[s]^n$ with the property that $x \prec y$
whenever $x \in A, y \not \in A$. Then define $f_C(x) = \argmin_{c \in C} (x - c)$
(where the arg min is taken with respect to the ordering $\prec$).
This defines $f_C$ on all of $[s]^n$, and it has the property that we required
before:
if $f_C(x) \in A + c$ for a unique $c$, then $f_C(x) - c \in A$ and $f_C(x) - c' \not \in A$
for every $c' \ne c$. By our requirement on $\prec$, $f_C(x) - c \prec f_C(x) - c'$ for every
$c' \ne c$ and so $f_C(x) = c$.

\begin{lemma}\label{lem:s-to-p-invariance}
  If there is a subgroup $G \subset ([s]^n, +)$ and some $a \in [s]^n$ such that
  $C = G + a$, then $f_C$ satisfies properties 1 and 2 of Theorem~\ref{thm:set-to-partition}.
\end{lemma}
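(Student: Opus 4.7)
The plan is to exploit two invariances: an equivariance of $f_C$ under translations by elements of $G$, and the translation-invariance of the joint law of $(X,Y)$ under $\P_\epsilon$. Together these force the law of $f_C(X)$, and the law of $(f_C(X), f_C(Y))$ on the diagonal, to be constant on the $G$-orbit on $C = G + a$, which is all of $C$.

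\textbf{Step 1 (equivariance of $f_C$).} I would first verify that for every $g \in G$ and every $x \in [s]^n$,
\[
  f_C(x + g) = f_C(x) + g.
\]
Because $g \in G$ and $C = G + a$, the map $\phi : c \mapsto c - g$ is a bijection of $C$ onto itself. The pointwise identity $x + g - c = x - \phi(c)$ then gives, with both $\argmin$'s taken with respect to $\prec$,
\[
  \argmin_{c \in C}(x + g - c) = \phi^{-1}\bigl(\argmin_{c' \in C}(x - c')\bigr) = f_C(x) + g,
\]
since $\phi^{-1}(c') = c' + g$.

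\textbf{Step 2 (invariance of the joint law).} Writing $Z := Y - X$ coordinatewise, a short computation gives $\P_\epsilon(Z_i = z \mid X_i = x) = (1 - \epsilon)\1{z = 0} + \epsilon/s$, which does not depend on $x$. Hence $Z$ is independent of $X$ under $\P_\epsilon$, and for every $h \in [s]^n$ the translated pair $(X + h, Y + h) = (X + h,\, (X+h) + Z)$ has the same law as $(X, Y)$; in particular this holds for every $h \in G$.

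\textbf{Step 3 (properties 1 and 2).} For any $c, c' \in C$ set $g := c - c' \in G$. Marginalizing out $Y$, Steps 1 and 2 give
\[
  \P(f_C(X) = c') = \P(f_C(X + g) = c) = \P(f_C(X) = c),
\]
which is property 1. Applying Step 1 in each coordinate and Step 2 to the joint pair,
\[
  \P(f_C(X) = c',\, f_C(Y) = c') = \P(f_C(X + g) = c,\, f_C(Y + g) = c) = \P(f_C(X) = c,\, f_C(Y) = c),
\]
so $\P(f_C(X) = c = f_C(Y))$ does not depend on $c \in C$, which is property 2.

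\textbf{Main obstacle.} The only subtlety is Step 1: the total order $\prec$ is not itself translation-invariant, so a priori translating the argument $x$ could scramble the ordering. The resolution is that the collections $\{x + g - c : c \in C\}$ and $\{x - c' : c' \in C\}$ agree as multisets, so $\prec$ is being applied to a single fixed family of vectors in $[s]^n$, and the bijection $\phi$ tracks which element of $C$ labels the unique $\prec$-smallest vector before and after translation.
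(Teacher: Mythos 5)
Your proof is correct and follows essentially the same route as the paper's: equivariance $f_C(x+g)=f_C(x)+g$ via the fact that $C-g=C$, translation invariance of the law of $(X,Y)$, and transitivity of $G$ on $C=G+a$. Your only additions are making explicit what the paper asserts without proof (the independence of $Y-X$ from $X$ giving translation invariance, and the observation that the order $\prec$ acts on the unchanged multiset of differences), which are fine but not a different argument.
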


\begin{proof}
  For any $g \in G$,
  \[
    f_C(x + g) = \argmin_{c \in C} (x - (c - g)) = g + \argmin_{c \in C - g} (x - c)
    = f_C(x) + g,
  \]
  since $C - g = C$.
  Moreover, note that the distribution of $(X, Y)$ is invariant under translation, in the sense that
  for any fixed $g \in [s]^n$, $(X, Y) + g \eqD (X, Y)$.
  Hence,
  \[
    \P(f(X) = c) = \P(f(X + g) = c) = \P(f(X) = c + g)
  \]
  for any $c \in C, g \in G$. Since $G$ acts transitively on $C$, this implies that
  $\P(f(X) = c) = 1/|C| = s^{-k}$; in other words, $f(X)$ is distributed uniformly on $C$.

  Similarly,
  \[
    \P(f(X) = f(Y) = c) = \P(f(X) = f(Y) = c + g)
  \]
  for any $c \in C, g \in G$ and so $\P(f(X) = f(Y) = c) = s^{-k} \P(f(X) = f(Y))$; in other
  words, $f(X)$ is uniformly distributed on $C$ conditioned on $f(X) = f(Y)$.
\end{proof}

\begin{proof}[Proof of Theorem~\ref{thm:set-to-partition}]
To prove Theorem~\ref{thm:set-to-partition}, we need to find a set $C$ which satisfies
the hypotheses of Lemmas~\ref{lem:s-to-p-stability} and~\ref{lem:s-to-p-invariance}.
In~\cite{BogdanovMossel:11}, they chose $C$ to be a uniformly random $k$-dimensional
affine subspace of $[2]^n$,
but since $[s]^n$ is not a vector space for every $s$, we will need something
slightly more complicated.

Let $s = \prod_{i=1}^m p_i^{j_i}$ be the prime factorization of
$s$. By the Chinese remainder theorem, the group $([s]^n, +)$ is isomorphic
to $\bigoplus_{i=1}^m ([p_i]^{nj_i}, +)$; let $\phi: \bigoplus_i ([p_i]^{nj_i}, +) \to [s]^n$
be an isomorphism. Independently for each $i = 1, \dots, m$
and $j = 1, \dots, k_i$,
let $G_{i,j}$ be a uniformly random $k$-dimensional subspace of $[p_i]^{n}$
(which is a vector space), and let $a_{i,j}$ be a uniformly random element of $[p_i]^n$.
Finally, define
\[
  C = \phi\Big(\bigoplus_{i,j} (a_{i,j} + G_{i,j})\Big)
  = \phi\Big(\bigoplus_{i,j} a_{i,j}\Big) + \phi\Big(\bigoplus_{i,j} G_{i,j}\Big).
\]
Since $\phi(\bigoplus_{i,j} G_{i,j})$ is a subgroup of $[s]^n$, the condition of
Lemma~\ref{lem:s-to-p-invariance} is satisfied with probability 1.

To check the condition of Lemma~\ref{lem:s-to-p-stability}, note that
for any $b = \bigoplus_{i,j} b_{i,j}$ and
$c = \bigoplus_{i,j} c_{i,j}$ in $\bigoplus_{i=1}^m [p_i]^{n k_i}$,
\[\P(b_{i,j}, c_{i,j} \in a_{i,j} + G_{i,j}) = p_i^{2(n-k)}\]
because $G_{i,j}$ is a uniformly random $k$-dimensional subspace of $[p_i]^n$.
Since the $a_{i,j}$ and
$G_{i,j}$ are independent, it follows
that
\[\P(\phi(b), \phi(c) \in C) = \prod_{i,j} P(a_{i,j}, b_{i,j} \in C_{i,j}) = s^{2(n-k)}.\]
That is, the distribution of $C$
satisfies the condition of Lemma~\ref{lem:s-to-p-stability}.
In particular, there exists a non-random $C'$ that belongs to the support of $C$,
and which also satisfies condition 3 of Theorem~\ref{thm:set-to-partition}.
By the previous paragraph, the fact that it belongs to the support of $C$ implies
that it also satisfies conditions 1 and 2.
\end{proof}

\section{An upper bound on agreement}
The proof of Theorem~\ref{thm:upper-bound} uses a hypercontractive
inequality in much the same way as it was used in~\cite{BogdanovMossel:11}.
The difference here is that~\cite{BogdanovMossel:11} used only the hypercontractive
inequality over the two-point space with the uniform measure, while we need one that applies to spaces
with more than two points.
Before stating this hypercontractive inequality, we need to define
the appropriate Bonami-Beckner-type operator: for a function $g: [s] \to \R$,
and some $0 < \tau < 1$,
define $S_\tau g = \tau g + (1 - \tau) \E g$. Thus,
for any $0 < \tau < 1$, and any $1 \le p, q \le \infty$, $S$ is an operator
$L_p([s]) \to L_q([s])$. We define $T_\tau: L_p([s]^n) \to L_q([s])^n$ by
$T_\tau = S_\tau^{\otimes n}$.
The operator $T_\tau$ can also be written in terms of the Fourier
expansion of $f$; see~\cite{Wolff:07} for details.
For us, the crucial property of $T_\tau$ is that
\begin{equation}\label{eq:bonami-beckner}
  \E_\epsilon f(X) f(Y) = \E (T_\tau f)^2
\end{equation}
when $\tau = \sqrt{1-\epsilon}$. This fact was used in~\cite{BogdanovMossel:11}
for $s=2$ to establish Theorem~\ref{thm:upper-bound} in that case.

The following hypercontractive inequality is due to Oleszkiewicz~\cite{Oleszkiewicz:03}:
\begin{theorem}\label{thm:hyper}
Fix $s \in \N$ and set $\alpha = \frac{1}{s}$, $\beta = 1-\alpha$.
Define
\[
\sigma(\alpha, p) = \left(
\frac{\beta^{2-2/p} - \alpha^{2-2/p}}
{\alpha^{1-2/p} \beta - \beta^{1-2/p} \alpha}
\right)^{1/2}.
\]
Then for any $f: [s]^n \to \R$,
if $\tau \le \sigma(\alpha, p)$ then
\[
\|T_\tau f\|_2 \le \|f\|_p.
\]
\end{theorem}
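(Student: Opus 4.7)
The plan is to reduce the inequality in three stages: (i) to the case $n=1$ by the standard tensorization principle for $L^p\to L^2$ operator norms, (ii) to a two-valued optimization on $[s]$ by Lagrangian analysis, and (iii) to an explicit single-variable calculus computation identifying $\sigma(\alpha,p)$. For stage (i), since $T_\tau = S_\tau^{\otimes n}$ and $p \le 2$, the multi-variable bound follows from the $n=1$ bound by applying $S_\tau$ one coordinate at a time while tracking the partial $L^p$ norm $F_k(x_1,\ldots,x_k) = \|f(x_1,\ldots,x_k,\cdot)\|_{L^p([s]^{n-k})}$; Minkowski's integral inequality yields the pointwise bound $G \le S_\tau F$ for the corresponding quantities after applying $S_\tau$ in the $(k{+}1)$-st coordinate, and the single-variable hypothesis then reduces the exponent from $p$ to $2$ coordinate by coordinate.

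For the single-variable problem on $[s]$, I would first use $|S_\tau f| \le S_\tau |f|$ pointwise together with $\||f|\|_p = \|f\|_p$ to assume $f \ge 0$. The spectral decomposition of $S_\tau$ on $L^2([s],\mathrm{unif})$ --- eigenvalue $1$ on constants and $\tau$ on the mean-zero subspace --- gives the clean identity
\[
  \|S_\tau f\|_2^2 = \bar f^2 + \tau^2\,\Var(f).
\]
Extremizing $\|S_\tau f\|_2^2/\|f\|_p^2$ subject to $\|f\|_p^p = 1$, the Lagrange stationarity condition reads $\tau^2 f_i + (1-\tau^2)\bar f = C f_i^{p-1}$, so $C x^{p-1} - \tau^2 x$ takes the same value at every $f_i$. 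Strict concavity of this map on $x>0$ for $1<p<2$ forces the extremizer to be two-valued: $f = a$ on $k$ coordinates and $f = b$ on the remaining $s-k$. A symmetry and monotonicity argument in $k$ (the ratio is symmetric in $\alpha \leftrightarrow \beta$) then shows the tightest constraint on $\tau$ comes from $k=1$, i.e.\ rare-value weight $\alpha = 1/s$.

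Finally, for two-valued $f$ with weights $(\alpha,\beta)$,
\[
  \|S_\tau f\|_2^2 = (\alpha a + \beta b)^2 + \tau^2\alpha\beta(a-b)^2,\qquad \|f\|_p^p = \alpha a^p + \beta b^p.
\]
Scaling $b=1$ and setting $u=a$, the sharp $\tau$ is
\[
  \sigma(\alpha,p)^2 = \inf_{u > 0,\,u\ne 1}\frac{(\alpha u^p + \beta)^{2/p} - (\alpha u + \beta)^2}{\alpha\beta(u-1)^2}.
\]
Setting the derivative to zero yields an algebraic equation for the minimizer $u^\ast$; substituting back and simplifying with $\alpha+\beta=1$ produces the closed form in the statement. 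The main obstacle is this last calculus step: verifying that the infimum is attained at an interior $u^\ast$ and equals the stated expression is elementary but error-prone. A cleaner alternative, essentially Oleszkiewicz's original route, embeds the uniform measure on $[s]$ into a tensor of asymmetric two-point Bernoulli measures of weight $(1/s,(s-1)/s)$ and invokes the sharp two-point hypercontractive inequality for asymmetric Bernoulli measures, bypassing the multi-valued extremizer analysis entirely.
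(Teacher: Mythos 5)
The paper does not prove this statement at all: Theorem~\ref{thm:hyper} is imported verbatim from Oleszkiewicz~\cite{Oleszkiewicz:03} (with~\cite{Wolff:07} as the standard reference for hypercontractivity of simple random variables), so there is no internal proof to compare against. Your outline does follow the route used in that literature --- tensorize via Minkowski (legitimate since the only nontrivial case is $p<2$; for $p\ge 2$ the claim is trivial because $S_\tau$ contracts $L_2$), reduce to one coordinate, show by a Lagrange/concavity argument that extremizers on $[s]$ are two-valued, and then invoke the sharp biased two-point inequality. That skeleton is sound, and the observation that functions of $\1{x=0}$ realize exactly the $(\alpha,\beta)$-biased two-point problem shows the constant $\sigma(1/s,p)$ cannot be improved.

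However, as a proof the proposal has genuine gaps precisely at the decisive steps. First, the identification
\[
\inf_{u>0,\,u\ne 1}\frac{(\alpha u^p+\beta)^{2/p}-(\alpha u+\beta)^2}{\alpha\beta(u-1)^2}
=\frac{\beta^{2-2/p}-\alpha^{2-2/p}}{\alpha^{1-2/p}\beta-\beta^{1-2/p}\alpha}
\]
is the entire content of the theorem, and you explicitly leave it unverified; ``set the derivative to zero and substitute back'' is not a proof here, since the stationarity equation is transcendental in $u$ and one must also rule out the boundary behaviour as $u\to 0,\infty$ (the ratio has finite, strictly larger limits there) --- this is exactly the nontrivial part of Lata{\l}a--Oleszkiewicz/Oleszkiewicz's work. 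Second, the claim that among two-valued extremizers the tightest constraint comes from level sets of size one, i.e.\ that $\sigma(\lambda,p)\ge\sigma(1/s,p)$ for all $\lambda=k/s$, is asserted as ``a symmetry and monotonicity argument'' but never argued; monotonicity of $\sigma(\cdot,p)$ on $(0,1/2]$ is true but requires proof, and without it the reduction to $\alpha=1/s$ is incomplete. (The Lagrange step also needs the extremizer to exist and the multiplier $C$ to be positive, both fixable but unaddressed.) Finally, the advertised ``cleaner alternative'' does not work as stated: the uniform measure on $[s]$ is not a tensor product of $(1/s,(s-1)/s)$ Bernoulli measures, and even where a measure-preserving factor map exists, $S_\tau$ on $[s]$ (eigenvalue $\tau$ on the whole mean-zero space) does not intertwine with the product Bonami--Beckner operator (eigenvalue $\tau^{|S|}$ on level $|S|$), so two-point hypercontractivity does not transfer by composition; Wolff's and Oleszkiewicz's actual arguments go through the reduction to two-valued functions, not through such an embedding.
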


We remark that the reason for not having an explicit $S(\delta)$
in Theorem~\ref{thm:upper-bound} and its corollaries
is that we do not know how to solve for $p$ in terms of $\sigma(\alpha, p)$.
However, an approximate solution can easily be found on a computer, and we used
such an approximation to produce Figure~\ref{fig:sim}.
To obtain Theorem~\ref{thm:upper-bound}, it suffices to study the limit of
$\sigma(\alpha, p)$ as $\alpha \to 0$.
Essentially, $\sigma^2(\alpha, p) \approx \alpha^{1-2/p}$ for small $\alpha$,
and so if we take $p$ to be slightly larger than what is needed to solve
$\alpha^{1-2/p} = 1-\epsilon$, then we will have $\sigma(\alpha, p) \ge 1-\epsilon$.
This will allow us to apply Theorem~\ref{thm:hyper} with $\tau = \sqrt{1-\epsilon}$.

\begin{lemma}\label{lem:sigma}
  Let $p = p(\alpha, \delta, \epsilon)$ solve 
  \[
  \alpha^{(2/p - 1) - \delta/\ln \alpha} = 1-\epsilon.
  \]
  Then for any $\delta > 0$ and $\epsilon^* \in (0, 1)$, there is an $A(\delta, \epsilon^*) > 0$
  such that $\alpha < A(\delta, \epsilon^*)$
  implies that for all $\epsilon \in (0, \epsilon^*)$,
\[\sigma^2(\alpha, p(\alpha, \delta, \epsilon)) \ge 1-\epsilon.\]
\end{lemma}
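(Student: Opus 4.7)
The plan is to show directly that
\[
\sigma^2\bigl(\alpha,\, p(\alpha, \delta, \epsilon)\bigr) \;\longrightarrow\; (1-\epsilon)\, e^\delta \qquad \text{as } \alpha \to 0,
\]
with the convergence uniform over $\epsilon \in (0, \epsilon^*)$. Since $(1-\epsilon)e^\delta$ exceeds $1-\epsilon$ by at least $(1-\epsilon^*)(e^\delta - 1) > 0$, a gap independent of $\epsilon$, this immediately yields the lemma once $\alpha$ is small enough that the approximation error is smaller than this gap.

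First I would reparametrize by setting $t := 2/p - 1$, so that $1 - 2/p = -t$ and $2 - 2/p = 1 - t$. The equation defining $p$ becomes $\alpha^{t - \delta/\ln \alpha} = 1 - \epsilon$; taking logarithms yields $t \ln \alpha - \delta = \ln(1-\epsilon)$, which exponentiates to the key identity
\[
\alpha^t \;=\; (1-\epsilon)\, e^\delta.
\]
From this one reads off $\alpha^{-t} = \bigl((1-\epsilon) e^\delta\bigr)^{-1}$ and $\alpha^{1-t} = \alpha / \bigl((1-\epsilon) e^\delta\bigr)$, regardless of the sign of $t$. Moreover $|t \ln \alpha| = |\ln(1-\epsilon) + \delta|$ is uniformly bounded by $|\ln(1-\epsilon^*)| + \delta$, so $t \to 0$ uniformly in $\epsilon$ as $\alpha \to 0$.

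Second, substituting these identities into
\[
\sigma^2(\alpha, p) \;=\; \frac{\beta^{1-t} - \alpha^{1-t}}{\alpha^{-t}\beta - \beta^{-t}\alpha}
\]
and clearing the factor $1/\bigl((1-\epsilon)e^\delta\bigr)$ from the denominator gives
\[
\sigma^2(\alpha, p) \;=\; \frac{(1-\epsilon) e^\delta \, \beta^{1-t} - \alpha}{\beta \,-\, (1-\epsilon) e^\delta \, \beta^{-t} \alpha}.
\]
Since $\beta = 1 - \alpha \to 1$ and $t \to 0$ uniformly in $\epsilon$, both $\beta^{1-t}$ and $\beta^{-t}$ converge to $1$ uniformly, while the additive terms $\alpha$ and $(1-\epsilon)e^\delta \beta^{-t}\alpha$ tend to $0$ uniformly (the latter using $(1-\epsilon)e^\delta \le e^\delta$ and $\beta^{-t}$ bounded). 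The quotient therefore tends to $(1-\epsilon) e^\delta$ uniformly in $\epsilon \in (0, \epsilon^*)$, and picking $A(\delta, \epsilon^*)$ small enough that the error is below $(1-\epsilon^*)(e^\delta - 1)$ closes the proof.

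The main obstacle will be the uniformity in $\epsilon$, complicated by the fact that $t = t(\alpha, \delta, \epsilon)$ can take either sign depending on whether $\delta$ exceeds $-\ln(1-\epsilon)$, so $p$ lies above or below $2$ across the range of interest. The identity $\alpha^t = (1-\epsilon) e^\delta$ sidesteps this subtlety cleanly by packaging the entire $p$-dependence into a single explicit multiplicative constant; what remains is a straightforward perturbation argument in $\alpha$ in which all error terms are controlled uniformly by the boundedness of $t \ln \alpha$ and of $1/(1-\epsilon)$ on the allowed range.
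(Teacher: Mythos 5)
Your proof is correct and follows essentially the same route as the paper: both hinge on extracting the identity $\alpha^{2/p-1}=(1-\epsilon)e^{\delta}$ from the defining equation for $p$, noting that $2/p-1\to 0$ uniformly in $\epsilon\in(0,\epsilon^*)$ as $\alpha\to 0$, and showing the remaining powers of $\beta$ tend to $1$ uniformly. The only difference is cosmetic: you compute the exact uniform limit $\sigma^2\to(1-\epsilon)e^{\delta}$, while the paper settles for a one-sided bound $\sigma^2\alpha^{1-2/p}\ge\beta^{2-2/p}-\alpha^{2-2/p}\to 1$; your version is in fact a bit cleaner (and correctly notes that $p$ may lie on either side of $2$, avoiding the paper's sign slips in the final display).
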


\begin{proof}
  Note that the definition of $p$ ensures that $p < 2$ for all $\alpha, \delta, \epsilon$.
  By the definition of $\sigma$,
  \begin{equation}\label{eq:sigma-lower-bound}
    \sigma^2(\alpha, p) \alpha^{1-2/p} = \frac{\beta^{2-2/p} - \alpha^{2-2/p}}{\beta - \alpha^{2/p} \beta^{1-2/p}}
    \ge \beta^{2-2/p} - \alpha^{2-2/p}.
  \end{equation}
  Fix $\epsilon^*$ and $\delta$, and note that as $\alpha \to 0$,
  $2 - 2/p \to 1$ uniformly for all $\epsilon \in (0, \epsilon^*)$.
  Hence, the right-hand side of~\eqref{eq:sigma-lower-bound}
  converges to 1 (uniformly in $\epsilon$) as $\alpha \to 0$.
  Plugging in the definition of $p$,
  \[
    \frac{\sigma^2(\alpha, p)}{1-\epsilon}
  = \sigma^2(\alpha, p) \alpha^{1-2/p} \alpha^{-\delta/\ln \alpha}
  \ge (1-o(1)) e^{-\delta}.
  \]
  In particular, the limit of the right hand side is strictly smaller than one, and so
  $\sigma^2(\alpha, p) \ge 1-\epsilon$ for sufficiently small $\alpha$.
\end{proof}

\begin{proof}[Proof of Theorem~\ref{thm:upper-bound}]
Fix $\epsilon, \delta > 0$. Let $A$ and $p$ be as in Lemma~\ref{lem:sigma}
and define $S = 1/A$. If $s \ge S$ then $\alpha = 1/s \le A$ and so
Lemma~\ref{lem:sigma} implies that $\sigma^2(\alpha, p) \ge 1-\epsilon$.
Thus,~\eqref{eq:bonami-beckner} and Theorem~\ref{thm:hyper} imply that
\[
\P_\epsilon(X, Y \in A)
= \|T_{\sqrt{1-\epsilon}} 1_A\|_2^2 \le \|1_A\|^2_{p}
= \P(A)^{\frac{2}{p}}.
\]
Hence, $\P_\epsilon(Y \in A | X \in A) \le \P(A)^{2/p - 1}$.
Taking the logarithm and dividing by $\ln \P(A)$ (which is negative), we have
\[
  M_\epsilon(A) 
= \frac{\ln \P_\epsilon(X, Y \in A)}{\ln \P(A)}
\ge \frac{2}{p} - 1 = \frac{\ln \frac{1}{1-\epsilon}}{\ln s} - \frac{\delta}{\ln s}.
\qedhere
\]
\end{proof}

\section{Hamming ball}

In this section, we consider the example of the Hamming
ball $A_{s,\alpha,n}$ consisting of $x \in [s]^n$ such that
$\#\{i: x_i = 0\} \le \frac{n}{s} - \alpha \sqrt n$.
This is an interesting example because~\cite{BogdanovMossel:11}
showed that if $\alpha$ is sufficiently large (depending on $\epsilon$),
then as $n \to \infty$, $A_{2,\alpha,n}$
achieves the upper bound of
Theorem~\ref{thm:upper-bound}. We will show, however, that this
is no longer true for large $s$.

Note that $1_{X_1 = 0}$ has mean $\frac{1}{s}$ and variance
$\frac{s-1}{s^2}$. Thus, the Berry-Ess\'een theorem implies that
for any fixed $\alpha$ and $s$,
\begin{equation}\label{eq:hamming}
  \P(A_{s,\alpha,n}) \to \P\Big(Z \le -\frac{\alpha s}{\sqrt{s-1}}\Big)
\end{equation}
as $n \to \infty$, where $Z \sim \normal(0, 1)$.
Moreover, if $(Z_1, Z_2) \sim \normal(0,
(\begin{smallmatrix}1 & 1-\epsilon \\ 1-\epsilon & 1\end{smallmatrix}))$
then
\begin{equation}\label{eq:correlated-hamming}
  \P_\epsilon(X , Y \in A_{s,\alpha,n})
  \to \P\Big(Z_1, Z_2 \le -\frac{\alpha s}{\sqrt{s-1}}\Big).
\end{equation}
In particular, by studying normal probabilities we can
use~\eqref{eq:hamming} and~\eqref{eq:correlated-hamming} to compute
$\lim_{n\to\infty} M_\epsilon(A_{s,\alpha,n})$.

\begin{lemma}\label{lem:normal-prob}
  Suppose that $(Z_1, Z_2) \sim \normal(0,
(\begin{smallmatrix}1 & 1-\epsilon \\ 1-\epsilon & 1\end{smallmatrix}))$.
  There is a sufficiently small constant $c$ such that for all $t > 0$ and $0 < \epsilon < 1$,
\[
  \P(Z_1 \ge t \mid Z_2 \ge t) \le \P(Z_1 \ge t)^{c\epsilon}.
\]
\end{lemma}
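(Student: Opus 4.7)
The plan is to reduce the claim to a direct application of the Gaussian (Nelson) hypercontractive inequality applied to the Ornstein--Uhlenbeck semigroup. Write $\rho = 1-\epsilon$ and let $U_\rho h(x) := \E[h(\rho x + \sqrt{1-\rho^2}\,W)]$, where $W \sim \normal(0,1)$. Since $Z_2 \mid Z_1 = x \sim \normal(\rho x, 1-\rho^2)$, setting $f := \mathbf{1}_{[t,\infty)}$ gives the identity
\[
\P(Z_1 \ge t, Z_2 \ge t) = \E\bigl[f(Z_1)\,(U_\rho f)(Z_1)\bigr].
\]
Applying H\"older's inequality with conjugate exponents $p$ and $p' = p/(p-1)$,
\[
\E[f(Z_1)(U_\rho f)(Z_1)] \le \|f\|_p \, \|U_\rho f\|_{p'}.
\]

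Next I invoke Nelson's theorem: $\|U_\rho f\|_{p'} \le \|f\|_p$ whenever $\rho^2 \le (p-1)/(p'-1) = (p-1)^2$. The tight choice $p = 1+\rho$ saturates this constraint, yielding
\[
\P(Z_1 \ge t, Z_2 \ge t) \le \|f\|_{1+\rho}^2 = \P(Z_1 \ge t)^{2/(1+\rho)}.
\]
Dividing both sides by $\P(Z_2 \ge t) = \P(Z_1 \ge t)$ and using $2/(1+\rho) - 1 = \epsilon/(2-\epsilon)$ gives
\[
\P(Z_1 \ge t \mid Z_2 \ge t) \le \P(Z_1 \ge t)^{\epsilon/(2-\epsilon)}.
\]
For $\epsilon \in (0,1)$, $\epsilon/(2-\epsilon) \ge \epsilon/2$, and since $\P(Z_1 \ge t) \le 1/2 < 1$ for $t > 0$, raising to a larger exponent only decreases the right-hand side. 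Hence the desired bound holds with $c = 1/2$ (or any smaller positive constant).

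There is no genuine obstacle to this strategy: the whole argument is a textbook application of Gaussian hypercontractivity, and the only verification is that Nelson's constraint $\rho^2 \le (p-1)^2$ is saturated by $p = 1+\rho$, an immediate computation. As a sanity check, the exponent $2/(1+\rho)$ matches the sharp ``noise stability'' exponent for correlated Gaussian half-spaces (as in Borell's inequality), so the $\epsilon/(2-\epsilon)$ rate we obtain reflects the correct linear scaling in $\epsilon$ rather than an artifact of the hypercontractive bound.
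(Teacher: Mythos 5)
Your proof is correct and takes essentially the same route as the paper: Gaussian (Nelson) hypercontractivity for the Ornstein--Uhlenbeck semigroup applied to the indicator $1_{[t,\infty)}$, with your H\"older-plus-($L^p \to L^{p'}$) step equivalent to the paper's symmetrization $\E f\, P_\tau f = \E (P_{\tau/2} f)^2$ followed by $L^p \to L^2$ hypercontractivity. If anything, your bookkeeping is the correct one: the exponent a valid application yields is $2/(2-\epsilon)$ (i.e.\ $p = 1 + e^{-\tau} = 2-\epsilon$ for $P_{\tau/2}$), whereas the paper's stated choice $p = 1+(1-\epsilon)^2$ is a small parameter slip that does not affect the lemma's conclusion.
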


Lemma~\ref{lem:normal-prob} has the following immediate consequence
for $M_\epsilon(A_{s,\alpha,n})$:
\begin{corollary}\label{cor:hamming}
  There exists a constant $c$ such that for any $s$ and $\alpha$,
  \[
    \lim_{n \to \infty} M_\epsilon(A_{s,\alpha,n}) \ge c \epsilon.
  \]
\end{corollary}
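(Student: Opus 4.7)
The plan is to pass to the bivariate Gaussian limit coming from~\eqref{eq:hamming}--\eqref{eq:correlated-hamming} and apply Lemma~\ref{lem:normal-prob}. Setting $t = \alpha s/\sqrt{s-1}$ and using the symmetry of the standard normal, those two limits read
\[
  \P(A_{s,\alpha,n}) \to \P(Z \ge t), \qquad \P_\epsilon(X, Y \in A_{s,\alpha,n}) \to \P(Z_1 \ge t, Z_2 \ge t),
\]
where $(Z_1, Z_2)$ is the centered bivariate normal with unit variances and correlation $1-\epsilon$ appearing in Lemma~\ref{lem:normal-prob}. The correlation value is a short covariance check: $\Cov(\1{X_i = 0}, \1{Y_i = 0}) = (1-\epsilon)(s-1)/s^2 = (1-\epsilon)\Var(\1{X_i=0})$, so after standardization the CLT limit has correlation exactly $1-\epsilon$.

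I would restrict attention to $\alpha > 0$, so that $t > 0$ and both limiting probabilities lie strictly in $(0, 1/2)$. This keeps the logarithms bounded away from $0$ and $-\infty$, so by continuity we can pass the limit through the ratio defining $M_\epsilon$ and obtain
\[
  \lim_{n \to \infty} M_\epsilon(A_{s,\alpha,n}) \;=\; \frac{\ln \P(Z_1 \ge t \mid Z_2 \ge t)}{\ln \P(Z \ge t)}.
\]
Lemma~\ref{lem:normal-prob} bounds the numerator by $c\epsilon \ln \P(Z \ge t)$, and dividing by the strictly negative denominator flips the inequality to give $\lim_n M_\epsilon(A_{s,\alpha,n}) \ge c\epsilon$, as desired.

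Essentially all the analytic content sits in Lemma~\ref{lem:normal-prob}; the steps above are just bookkeeping around it, so there is no real obstacle. The only minor annoyance is the edge case $\alpha \le 0$, where the Hamming ball fills at least half of $[s]^n$ in the limit and Lemma~\ref{lem:normal-prob} does not apply directly. This regime corresponds to a radius at least as large as the typical Hamming distance and is not the meaningful range for an NICD protocol; I would absorb it either by monotonicity in $\alpha$ or by a direct computation at $\alpha = 0$ via the orthant formula $\P(Z_1, Z_2 \ge 0) = \tfrac{1}{4} + \tfrac{1}{2\pi}\arcsin(1-\epsilon)$, which in fact gives $\lim_n M_\epsilon = \Theta(\sqrt\epsilon)$ and dominates $c\epsilon$ after possibly shrinking $c$.
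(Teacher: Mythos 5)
Your proposal is correct and follows essentially the same route as the paper: pass to the bivariate Gaussian limit via~\eqref{eq:hamming} and~\eqref{eq:correlated-hamming}, apply Lemma~\ref{lem:normal-prob} to the numerator, and divide by the negative denominator. Your extra care about the sign of $t$ and the $\alpha \le 0$ edge case (which the paper's one-line proof glosses over, and which can also be handled by noting that the hypercontractive proof of Lemma~\ref{lem:normal-prob} never actually uses $t>0$) is a fine addition but does not change the argument.
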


By comparison, the trivial protocol $A = \{x: x_1 = \cdots = x_k = 0\}$
has
\[
  M_\epsilon(A) = \frac{1}{\ln s}
  \ln\bigg(\frac{1}{1 - (1 - s^{-1}) \epsilon}\bigg)
  \le \frac{C' \epsilon}{\ln s}.
\]
In particular, for a fixed success probability and a sufficiently
large alphabet $s$, the trivial protocol recovers $c \ln s$ times
as many symbols as the Hamming ball protocol.

\begin{proof}[Proof of Corollary~\ref{cor:hamming}]
  According to~\eqref{eq:hamming} and~\eqref{eq:correlated-hamming},
  \[
    M_\epsilon(A_{s,\alpha,n}) \to \frac{ 
      \log \P\Big(Z_1 \le -\frac{\alpha s}{\sqrt{s-1}}\Big\mid Z_2 \le -\frac{\alpha s}{\sqrt{s-1}}\Big)
    }{
      \log \P\Big(Z_1 \le -\frac{\alpha s}{\sqrt{s-1}}\Big)
    }.
  \]
  Now apply Lemma~\ref{lem:normal-prob} to the numerator
  (recalling that the denominator is negative):
  \[
    \lim M_\epsilon(A_{s,\alpha,n}) \ge
    \frac{\log \P\Big(Z_1 \le -\frac{\alpha s}{\sqrt{s-1}}\Big)^{c\epsilon}}
    {\log \P\Big(Z_1 \le -\frac{\alpha s}{\sqrt{s-1}}\Big)} = c \epsilon.
    \qedhere
  \]
\end{proof}

\begin{proof}[Proof of Lemma~\ref{lem:normal-prob}]
  The proof makes use of the Ornstein-Uhlenbeck semigroup $P_t$,
  defined by
  \[
    (P_\tau f)(x) = \E f(e^{-\tau}x + \sqrt{1-e^{-2\tau}} Z),
  \]
  where $Z \sim \normal(0, 1)$.
  The Nelson-Gross~\cite{Nelson:73,Gross:75} hypercontractive
  inequality states that
  \begin{equation}\label{eq:gaussian-hypercontractive}
    \big(\E P_\tau |f(Z)|^q\big)^{1/q}
    \le \big(\E |f(Z)|^p\big)^{1/p}
  \end{equation}
  whenever $q \le 1 + e^{2\tau} (p-1)$. If we set $f(x) = 1_{x \ge t}$
  and $\tau = -\log(1-\epsilon)$, then
  \[
    \P(Z_1, Z_2 \ge t) = \E f(Z_1) f(Z_2) = \E f(Z) P_\tau f(Z)
    = \E (P_{\tau/2} f(Z))^2.
  \]
  Thus,~\eqref{eq:gaussian-hypercontractive} with $q=2$
  and $p = 1 + e^{-2\tau} = 1 + (1-\epsilon)^2$ implies that
  \[
    \P(Z_1, Z_2 \ge t) \le (\E f(Z))^{\frac{2}{1 + (1-\epsilon)^2}}
    = \P(Z_2 \ge t)^{\frac{2}{1 + (1-\epsilon)^2}}
    \le \P(Z_2 \ge t)^{1 + c \epsilon}.
  \]
  Hence,
  \[
    \P(Z_1 \ge t | Z_2 \ge t) \le 
    \frac{\P(Z_1 \ge t, Z_2 \ge t)}{\P(Z_2 \ge t)} \le \P(Z_2 \ge t)^{c \epsilon}.
    \qedhere
  \]
\end{proof}

\bibliographystyle{plain}
\bibliography{my,all}
\end{document}